\DeclareMathOperator\sign{sgn}
\DeclareMathOperator\diag{diag}
\let\NAT@parse\undefined
\theoremstyle{definition}
\theoremstyle{definition}
\newtheorem{assum}{Assumption}
\newtheorem{theorem}{Theorem}
\theoremstyle{remark}
\newtheorem{remark}{Remark}
\title{Output Feedback Adaptive Optimal Control of Affine Nonlinear systems with a Linear Measurement Model}
\author{\normalsize Tochukwu Elijah Ogri$^{1}$ \and S M Nahid Mahmud$^{2}$ \and Zachary I. Bell$^{3}$ \and Rushikesh Kamalapurkar$^{1}$% <-this % stops a space
\thanks{*This research was supported in part by Office of Naval Research under award number
N00014-21-1-2481 and the Air Force Research Laboratories under contract number FA8651-19-2-0009. Any opinions, findings, or recommendations in this article are those of the author(s), and do not necessarily reflect the views of the sponsoring agencies.}% 
\thanks{$^{1}$School of Mechanical and Aerospace Engineering, Oklahoma State University, email: {\tt\footnotesize \{tochukwu.ogri, rushikesh.kamalapurkar\} @okstate.edu}.}
\thanks{$^{2}$ School of Aeronautics and Astronautics, Purdue University, West Lafayette, 47907, USA, e-mail:  {\tt\footnotesize \{mahmud7\}@purdue.edu}.}%
\thanks{$^{3}$ Air Force Research Laboratories, Florida, USA, email: {
\tt \footnotesize zachary.bell.10@us.af.mil.}}}
\begin{document}
\maketitle
\thispagestyle{empty}
\pagestyle{empty}
\begin{abstract} 
Real-world control applications in complex and uncertain environments require adaptability to handle model uncertainties and robustness against disturbances. This paper presents an online, output-feedback, critic-only, model-based reinforcement learning architecture that simultaneously learns and implements an optimal controller while maintaining stability during the learning phase. Using multiplier matrices, a convenient way to search for observer gains is designed  along with a controller that learns from simulated experience to ensure stability and convergence of trajectories of the closed-loop system to a neighborhood of the origin. Local uniform ultimate boundedness of the trajectories is established using a Lyapunov-based analysis and demonstrated through simulation results, under mild excitation conditions.
\end{abstract}

% \begin{IEEEkeywords}
% reinforcement learning, observer design, state estimation, optimal control
% \end{IEEEkeywords}

\section{Introduction}
% Talk about reinforcement learning and state estimation using output feedback

 Reinforcement learning (RL) has proven to be robust to modeling errors in dynamic systems and capable of addressing parametric uncertainties, ensuring a fast convergence rate to the optimal solution while maintaining stability regardless of disturbances in the system \cite{SCC.Kamalapurkar2014,SCC.Mahmud.Nivison.ea2021, SCC.Kamalapurkar.Rosenfeld.ea2016}. 

Real-time state estimation while maintaining system stability is offered by model-based reinforcement learning (MBRL)  through the use of neural networks (NNs) to achieve fast approximation without complete knowledge of the dynamics of the system\cite{SCC.Kamalapurkar2017a,SCC.Self.Harlan.ea2019}. In the absence of full state measurement information, MBRL controllers in \cite{SCC.Cichosz1999, SCC.Wawrzynski2009, SCC.Zhang.Cui.ea2011, SCC.Adam.Busoniu.ea2012}, tend to perform poorly since the excitation conditions require the accuracy
of the estimated model to guarantee the closed-loop stability of the system. To address this, the semi-definite programming (SDP) observer design technique in \cite{SCC.Wang.Yang.ea2017} is augmented to  provide an accurate estimated model for the MBRL controller which will yield a more robust framework to model uncertainties/disturbances and achieve an optimal solution while guaranteeing the stability of the closed-loop system.
 % to search for the extended Luenberger observer gains for nonlinear systems 
% General context about the methods used and existing literature

In this paper, an observer for state estimation using semi-definite programming (SDP) to search for the extended Luenberger observer gains is developed for continuous-time nonlinear systems. The observer developed in this paper uses the multiplier matrix approach to develop sufficient conditions for stability of the nonlinear control affine system with partially constrained input. By placing bounds on the derivatives of the drift and control effectiveness functions of the system, sufficient conditions developed using Lyapunov analysis are used to guarantee the stability of the state estimation error dynamics \cite{SCC.Behcet.Martin.ea2008, SCC.Quintana.Bernal.ea2021}. The state estimates are then used in a model-based reinforcement learning (MBRL) framework to design a controller that ensures the stability of the closed-loop system during learning.

% Get specific about the methods used
This observer architecture is motivated by the observer design technique developed in \cite{SCC.Wang.Yang.ea2017} which introduces a third observer gain to cancel nonconvex terms in the semidefinite condition. When compared with the extended Luenberger observer in \cite{SCC.ARCAK.ea2001}, which just has a linear correction term and a nonlinear injection term, the method developed in \cite{SCC.Wang.Yang.ea2017} extends to a less conservative class of nonlinear systems while ensuring uniform asymptotic convergence. This paper offers a modification to the observer structure in \cite{SCC.Wang.Yang.ea2017, SCC.Jung.Hee2008} with fewer restrictions on the class of nonlinear systems. The observer is then combined with an MBRL-based controller to optimize a given performance objective. 

The goal of the MBRL is to learn an optimal controller to approximate the value function, and subsequently, the optimal policy, for an input-constrained nonlinear system. While adaptive optimal control methods have been extensively studied in the literature to solve the online optimal control problem, \cite{SCC.Modares.Lewis2013, SCC.Modares.Lewis.ea2014, SCC.Yang.Liu.Huang.ea2013, SCC.Mahmud.Nivison.ea2021, arXivSCC.Mahmud.Abudia.ea2021, SCC.Cichosz1999, SCC.Wawrzynski2009, SCC.Zhang.Cui.ea2011, SCC.Adam.Busoniu.ea2012,SCC.Abu-Khalaf.Lewis2005}, most existing results require full state feedback. In this paper, an output feedback problem is solved for systems with a linear measurement model. Furthermore, unlike actor-critic MBRL methods popular in the literature \cite{SCC.Kamalapurkar.Rosenfeld.ea2016, SCC.Kamalapurkar.Klotz.ea2018}, this paper presents a critic-only structure to provide an approximate solution of the Hamilton–Jacobi–Bellman (HJB) equation that requires the identification of fewer free parameters. Lyapunov methods are used to show that the states of the system, the state estimation error, and the critic weights are locally uniformly ultimately bounded (UUB) for all time starting from any initial condition.
% Take about differences and contributions to the existing literature

  The main contributions are as follows:
  \begin{enumerate}
     \item  This paper uses an observer with bounded Jacobian for fast state estimation and an online RL critic-only architecture to learn a controller that keeps the input-constrained nonlinear systems stable during the learning phase. This novel architecture is different from existing NN network observers in \cite{SCC.Kim.Lewis.ea1997,SCC.Abdollahi.Patel2006,SCC.Yang.Liu.Huang.ea2013,SCC.Yang.Liu.ea2014,SCC.Huang.Jiang2015, SCC.farza.ea2010} whose convergence analysis relies solely on negative terms that result from a $\sigma-$modification-like term added to the weight update laws, and as a result, similar to adaptive control, the convergence of the observer weights to their true values cannot be expected, and convergence of state estimates to the true states is not robust to disturbances and approximation errors.
    \item This paper proposes a robust output feedback RL method for a nonlinear control-affine system with a general $C$ matrix. The method in this paper does not require restrictions on the form and rank of the $C$ matrix which is different from most NN  observers in the literature \cite{SCC.Yang.Liu.Huang.ea2013,SCC.Yang.Liu.ea2014,SCC.Huang.Jiang2015}. A drawback of existing state feedback control methods, like\cite{SCC.Yang.Liu.ea2014}, is that the substitution $x = C^+y$ implicitly restricts the proof to systems where the number of outputs is larger than the number of states, which invalidates the point of output feedback control.
\end{enumerate}
     	         
 The rest of the paper is organized as follows: Section ~\ref{section:problemFormulation} contains the problem formulation, Section ~\ref{section:stateEstimator} introduces the  state estimator/observer, Section ~\ref{section:sectorFormulation} presents the Multiplier matrices and sector Conditions, Section ~\ref{section:controlDesign} contains control design using MBRL methods, Section ~\ref{section:stabilityAnalysis} contains stability analysis of the developed architecture, Section ~\ref{section:simulation} contains simulation results and Section ~\ref{section:conclusion} concludes the paper.

\section{Problem Formulation}	
\label{section:problemFormulation}
This paper considers nonlinear dynamical systems of the form, 
\begin{equation}
\dot{x} = f(x)+g(x)u , \quad y = Cx, \label{eq:dynamics_x}
\end{equation}
where $x \in \mathbb{R}^{n}$ is the system state , $u \in\mathbb{R}^{m}$ is the control input,  $C \in \mathbb{R}^{q \times n}$ is the output matrix, and $y \in \mathbb{R}^{q}$ is the measured output. The functions $f: \mathbb{R}^{n} \rightarrow \mathbb{R}^{n}$, and $g: \mathbb{R}^{n} \rightarrow \mathbb{R}^{n \times m}$, denote the drift and the control effectiveness matrix, respectively.

\begin{assum}\label{ass:jacobianbounds}
    The functions $f$ and  $g$ are known, their derivatives exist on a compact set $\mathcal{C}\subset\mathbb{R}^n$, and satisfy the element-wise bounds
    \begin{gather}
	(M_{f_{1}})_{i,j} \leq \frac{\mathrm{d}(f(x))_i}{\mathrm{d}(x)_j} \leq (M_{f_{2}})_{i,j}, \\ 
	(M_{g_{1}})_{i,j,k} \leq \frac{\mathrm{d}(g(x))_{i,k}}{\mathrm{d}(x)_j} \leq (M_{g_{2}})_{i,j,k},
\end{gather}
for all $x\in\mathcal{C}$, $i,j = 1,\ldots,n$ and $k=1,\ldots,m$, where $(\cdot)_{i,j,k}$ and $(\cdot)_{i,j}$, and $(\cdot)_{i}$ denote the element of the array $(\cdot)$ at the indices indicated by the subscript. 
\end{assum}
\begin{remark}
The conditions stated in Assumption~\ref{ass:jacobianbounds} are commonly required in several observer design schemes (see, e.g., \cite{SCC.Zemouche.Bara.ea2005,SCC.Wang.Bevly.ea2014,SCC.Wang.Yang.ea2017,SCC.Rajamani.Zemouche.ea2020}.
\end{remark}
The objective is to design a uniformly asymptotic observer to estimate the states online, using input-output measurements, and to simultaneously synthesize and utilize an estimate of a controller that minimizes the cost functional defined in (\ref{costfunction}), under the saturation constraint $|{(u)_i}| \leq \bar{\lambda} > 0 \text{ for } i = 1, \hdots, m$ while ensuring local uniform ultimate boundedness of the trajectories of the system in (\ref{eq:dynamics_x}). 

\section{State Estimator}
\label{section:stateEstimator}

In this section, a state estimator inspired by the extended Luenberger observer is developed to generate estimates of $x$. 
Let the nonlinear dynamics in (\ref{eq:dynamics_x}) be expressed in the form below,
\begin{equation}
    \dot{x} =	M_{f_{1}}x+M_{g_{1}}u x+\bar{f}(x) + \bar{g}_{u}(x,u),\label{aug_dynamics_x}
\end{equation}
where
\begin{gather}
    \bar{f}(x) = -M_{f_{1}}x+f(x), \text{ and } \\
    \bar{g}_{u}(x,u) = -M_{g_{1}}ux + \sum_{i=1}^{m} g_i(x)(u)_i.
\end{gather}
The derivatives of $\bar{f}$ and $\bar{g}$ satisfy the element-wise inequalities
\begin{align}
	0 & \leq \frac{\mathrm{d}(\bar{f}(x))_i}{\mathrm{d}(x)_j} \leq (M_{f_{2}})_{i,j}-(M_{f_{1}})_{i,j} \text{ and } \label{aug_jac_f} \\ 
	0 & \leq \frac{\mathrm{d}(\bar{g}_{u}(x,u))_{i,k}}{\mathrm{d}(x)_j}\leq \left[(M_{g_{2}})_{i,j,k}-(M_{g_{1}})_{i,j,k}\right](u)_k,\label{aug_jac_g}
\end{align}
 where $i, j \coloneqq 1, \hdots, n$, and $k \coloneqq 1, \hdots, m$. Thus, $\bar{M_f}_1 = 0_{n \times n}$, $\bar{M_f}_2 = M_{f_{2}}-M_{f_{1}}$, $\bar{M_g}_1 = 0_{n \times n \times m}$ and $\bar{M_g}_2 = M_{g_{2}}-M_{g_{1}}$.
To simplify the notation, $n \times n \times 1$ arrays are treated as $n \times n$ matrices in the following development.

Using the derivative bounds, a state estimator with three correction terms is designed as
\begin{equation}
    	\dot{\hat{x}} = M_{f_{1}}\hat{x}  +M_{g_{1}}u\hat{x}+\bar{f}[\hat{x}+H\left(y-C\hat{x}\right)]+ \bar{g}_{u}[\hat{x}+K\left(y-C\hat{x}\right), u]+L\left(y-C\hat{x}\right),\label{eq:observerdynamics_x}
\end{equation}
where $\hat{x}  \in \mathbb{R}^{n}$ is the
estimate of $x$, $H \in \mathbb{R}^{n \times q}$, $K \in \mathbb{R}^{n \times q}$ and $L \in \mathbb{R}^{n \times q}$ are observer gains,  $H\left(y-C\hat{x}\right)$ and $K\left(y-C\hat{x}\right)$ are nonlinear injection terms and  $L\left(y-C\hat{x}\right)$ is a linear correction term. The estimation error is defined as
\begin{equation}
	e = x -\hat{x},\label{eq:dev_Estimation_error_x}
\end{equation}
and the estimation error dynamics is given by
\begin{equation}\label{aug_error}
	\dot{e} = \left( M_{f_{1}}+M_{g_{1}}u-LC\right)e+ \bar{f}(x)+\bar{g}_{u}(x,u)
    -\bar{f}[\hat{x}+H\left(y-C\hat{x}\right)]-\bar{g}_{u}[\hat{x}+K\left(y-C\hat{x}\right),u].
\end{equation}
For convenience of notation, let $ \bar{\phi}_{f}(t,e) \coloneqq   \bar{f}(x)-\bar{f}[\hat{x}+H\left(y-C\hat{x}\right)]$, $ \bar{\phi}_{g}(t,e, u)  \coloneqq  \bar{g}_{u}(x,u)-\bar{g}_{u}[\hat{x}+K\left(y-C\hat{x}\right),u]$, $ {M}_{ug} \coloneqq M_{g}u$, $\bar M_{ug_1} \coloneqq 0_{n \times n \times m}$, and $\bar M_{ug_2} \coloneqq \left(M_{g_{2}}-M_{g_{1}}\right)u$.

The differential mean value theorem (DMVT) in \cite[Theorem 2.1]{SCC.Zemouche.ea2005} guarantees that the difference function $\bar{\phi}_{f}(t,e)$ is proportional to $x-\hat{x}$, and can be expressed as
\begin{equation}
 \label{M_f}
\bar{\phi}_{f}(t,e) = \bar{M}_{f}(I-HC)(x-\hat{x}).
\end{equation}
where $\bar{M}_{f}$ is a time-varying matrix but always constrained in a compact set defined by $\bar{M}_{f_{1}}$, $\bar{M}_{f_{2}}$ in (\ref{aug_jac_f}).
Similarly, 
\begin{equation}
      \label{M_g}
        \bar{\phi}_{g}(t,e, u)  = \bar{M}_{g}u(I-KC)(x-\hat{x}),
\end{equation}
where the proportional factor $\bar M_{g}$ is a time-varying 3-dimensional array that is constrained in a compact set defined by $\bar{M}_{g_{1}}$, $\bar{M}_{g_{2}}$ in (\ref{aug_jac_g}).

\section{Multiplier Formulation and Sector Conditions}
\label{section:sectorFormulation}

In this section, the conditions sufficient for Lyapunov stability are derived by designing multiplier matrices that characterize the nonlinear functions $f$  and $g$. As shown in \cite{SCC.Behcet.Martin.ea2008}, using the multiplier matrix approach in the analysis and control of nonlinear systems, stability can be achieved if the  conditions developed in this section are satisfied.

The DMVT implies that the difference functions $ \bar{\phi}_{f}(t,e)$ and $ \bar{\phi}_{g}(t,e, u)$ are bounded as
\begin{equation}\label{eq:f_ineq}
 \bar{M}_{f_{1}}(I-HC)e\leq \bar{\phi}_{f}(t,e) \leq \bar{M}_{f_{2}}(I-HC)e,
\end{equation}
and
\begin{equation}\label{eq:g_ineq}
\bar{M}_{ug_{1}}(I-KC)e\leq \bar{\phi}_{g}(t,e, u) \leq \bar{M}_{ug_{2}}(I-KC)e.
\end{equation}
The stability of the state estimation error dynamics can now be shown using only the sector information about $ \bar{\phi}_{f}(t,e)$ and $ \bar{\phi}_{g}(t,e, u)$ constrained on a compact set $\mathcal{C}$, where the Jacobian bounds in  (\ref{aug_jac_f}) and (\ref{aug_jac_g}) hold. The bounds in (\ref{eq:f_ineq}) and (\ref{eq:g_ineq}) can be used to obtain the inequalities
\begin{equation}\label{quad_f}
    \left[ \bar{\phi}_{f}(t,e)\right]^\mathrm{T}[ \bar{\phi}_{f}(t,e)-\bar{M}_{f_{2}}(I-HC)e]\leq 0,
\end{equation}
and
\begin{equation}\label{quad_g}
  \left[ \bar{\phi}_{g}(t,e, u)\right]^\mathrm{T}[ \bar{\phi}_{g}(t,e, u)-\bar{M}_{ug_{2}}(I-KC)e]\leq 0.
\end{equation}
Rewriting the inequalities in (\ref{quad_f}) and (\ref{quad_g}) into their quadratic form yields,
\begin{equation}\label{matrixCondH}
\begin{bmatrix} 
e \\ \bar{\phi}_{f}
\end{bmatrix}^\mathrm{T}\begin{bmatrix}
I-HC & 0\\ 0 & I
\end{bmatrix}^\mathrm{T}J_{f}\begin{bmatrix}
I-HC & 0\\ 0 & I
\end{bmatrix}\begin{bmatrix}
e \\ \bar{\phi}_{f}
\end{bmatrix} \leq 0, 
\end{equation}
and
\begin{equation}\label{matrixCondK}
\begin{bmatrix}
e \\ \bar{\phi}_{g}
\end{bmatrix}^\mathrm{T}\begin{bmatrix}
I-KC & 0\\ 0 & I
\end{bmatrix}^\mathrm{T}J_{g}\begin{bmatrix}
I-KC & 0\\ 0 & I
\end{bmatrix}\begin{bmatrix}
e \\ \bar{\phi}_{g}
\end{bmatrix} \leq 0, 
\end{equation}
with 
\begin{equation}\label{J_f}
J_{f} = \begin{bmatrix}
0 & -\frac{M_{f_{2}}^\mathrm{T}-M_{f_{1}}^\mathrm{T}}{2}\\ -\frac{M_{f_{2}}-M_{f_{1}}}{2} & I
\end{bmatrix},
\end{equation}
and
\begin{equation}\label{J_g}
J_{g} = \begin{bmatrix}
0& -\frac{M_{ug_2}^\mathrm{T}-M_{ug_1}^\mathrm{T}}{2}\\ -\frac{M_{ug_2}-M_{ug_1}}{2} & I
\end{bmatrix}. 
\end{equation}
The observer error dynamics in (\ref{aug_error}) can now be expressed as
\begin{equation}\label{aug_error2}
     \dot{e} = \left(M_{f_{1}}+ M_{ug_1}-LC\right)e +\bar{\phi}_{f}(t,e) + \bar{\phi}_{g}(t,e,u).
\end{equation}
The following theorem establishes convergence of the estimator, provided the control input remains bounded and the system trajectories remain within the compact set $\mathcal{C}$ where the bounds in (\ref{aug_jac_f}) and (\ref{aug_jac_g}) hold.
\begin{remark}
    Note that the assumption that the trajectory and control signals are bounded applies only to the following theorem, and not to the controller designed in Section ~\ref{section:controlDesign}. The controller designed in Section ~\ref{section:controlDesign} ensures that provided the initial condition is close enough to the origin, the trajectories stay within the compact set $\mathcal{C}$. As such, provided the initial condition is close enough to the origin the bounds on the derivatives of $\bar f$ and $\bar g$ are guaranteed to hold along the trajectories of the closed-loop system.
\end{remark}
\begin{theorem}\label{thm:stateobserver}
    Given a system satisfying Assumption \ref{ass:jacobianbounds}, if there exists a symmetric positive definite matrix $P$, and observer gains $L$, $H$, $K$ that satisfy the matrix inequality
    \begin{equation}\label{lmi}
        \begin{bmatrix}
        (A-LC)^\mathrm{T} P + P (A-LC) + 2\alpha P & P - J_{21}^\mathrm{T} \\
        P - J_{21} & -(J_f)_{22} - (J_g)_{22}
        \end{bmatrix} \leq 0,
    \end{equation}
    where $A \coloneqq M_{f_{1}}+M_{ug_1}$, $J_{21} \coloneqq (J_{f})_{21}(I-HC)+(J_{g})_{21}(I-KC)$, and $I$ is an identity matrix, then the observer error system in (\ref{aug_error2}) is locally uniformly asymptotically stable.
    \end{theorem}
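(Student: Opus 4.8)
The plan is to certify stability with the quadratic Lyapunov candidate $V(e) = e^{\mathrm{T}} P e$, which is positive definite because $P = P^{\mathrm{T}} \succ 0$, and whose unique minimizer $e = 0$ is an equilibrium of (\ref{aug_error2}) since $\bar\phi_f(t,0) = 0$ and $\bar\phi_g(t,0,u) = 0$. First I would differentiate $V$ along the error dynamics in the form (\ref{aug_error2}), $\dot e = (A - LC)e + \bar\phi_f(t,e) + \bar\phi_g(t,e,u)$ with $A \coloneqq M_{f_1} + M_{ug_1}$, obtaining $\dot V = e^{\mathrm{T}}\big[(A-LC)^{\mathrm{T}}P + P(A-LC)\big]e + 2e^{\mathrm{T}}P\bar\phi_f + 2e^{\mathrm{T}}P\bar\phi_g$. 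I would then add $2\alpha V$ to both sides and append the two sector inequalities (\ref{matrixCondH}) and (\ref{matrixCondK}), whose left-hand sides are nonpositive whenever the trajectory remains in the compact set $\mathcal{C}$ where the Jacobian bounds (\ref{aug_jac_f})--(\ref{aug_jac_g}) hold. This is an $\mathcal{S}$-procedure relaxation: with unit multipliers it produces an upper bound on $\dot V + 2\alpha V$ that is a single quadratic form in the augmented vector $\xi \coloneqq \big[\,e^{\mathrm{T}}\ \ \bar\phi_f^{\mathrm{T}}\ \ \bar\phi_g^{\mathrm{T}}\,\big]^{\mathrm{T}}$, with the Lyapunov cross terms $2e^{\mathrm{T}}P\bar\phi_{(\cdot)}$ landing on the off-diagonal blocks and the $I$ blocks of $J_f$, $J_g$ in (\ref{J_f})--(\ref{J_g}) landing on the diagonal.

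Next I would eliminate the two nonlinearity blocks from this quadratic form by a Schur complement with respect to its lower-right block, which is negative definite (it is built from $-(J_f)_{22} - (J_g)_{22} = -2I \prec 0$), thereby collapsing the certificate onto the $2n\times 2n$ inequality in the statement, with the two injection corrections accumulated into $J_{21} = (J_f)_{21}(I-HC) + (J_g)_{21}(I-KC)$ and the off-diagonal block $P - J_{21}^{\mathrm{T}}$. Hence (\ref{lmi}) implies $\dot V + 2\alpha V \le 0$, i.e. $\dot V \le -2\alpha V$ along the error trajectory, so $V(e(t)) \le V(e(0))\,e^{-2\alpha t}$ and $\|e(t)\| \le \sqrt{\lambda_{\max}(P)/\lambda_{\min}(P)}\,\|e(0)\|\,e^{-\alpha t}$; since the decay rate $\alpha$ and the overshoot constant are independent of the initial time, the convergence is uniform even though the error dynamics are time-varying through $u(\cdot)$ and the DMVT matrices $\bar M_f$, $\bar M_g u$.

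For the \emph{local} claim I would pick a level $c > 0$ such that the sublevel set $\Omega_c \coloneqq \{e : V(e) \le c\}$ is contained in the set of estimation errors for which $\hat x = x - e$ stays in the region where (\ref{aug_jac_f})--(\ref{aug_jac_g}), hence the sector bounds, are valid --- which, under the theorem's standing hypothesis that $x(t) \in \mathcal{C}$ and $u(\cdot)$ is bounded, is a constraint on $e$ alone. Because $\dot V < 0$ on $\Omega_c \setminus \{0\}$, the set $\Omega_c$ is forward invariant, the sector conditions persist for all $t$, and the exponential estimate above holds for every $e(0) \in \Omega_c$, which establishes local uniform asymptotic stability. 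I expect the main obstacle to be the algebra of the second step: correctly propagating the two distinct gains $H$, $K$ and the two sector matrices $J_f$, $J_g$ through the $\mathcal{S}$-procedure and the Schur complement so that the augmented $3n \times 3n$ inequality genuinely collapses to the stated $2n \times 2n$ form (\ref{lmi}); a secondary subtlety is making the domain argument of the last step precise, since the admissible region for $e$ is defined implicitly in terms of the (assumed bounded) control input and the compact set $\mathcal{C}$.
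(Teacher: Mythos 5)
Your overall strategy --- the quadratic Lyapunov function $V(e)=e^{\mathrm{T}}Pe$, the DMVT-based sector characterizations (\ref{matrixCondH})--(\ref{matrixCondK}), an S-procedure with unit multipliers, and a forward-invariant sublevel set to keep $e$ in the region where the Jacobian bounds (\ref{aug_jac_f})--(\ref{aug_jac_g}) hold --- is the same as the paper's, and your closing exponential estimate $\|e(t)\|\le\sqrt{\lambda_{\max}(P)/\lambda_{\min}(P)}\,\|e(0)\|e^{-\alpha t}$ is in fact a sharper conclusion than the class-$\mathcal{KL}$ bound the paper extracts from (\ref{VeIneq}) via Khalil's Theorem 4.9.

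The step you flagged as the main obstacle is, however, a genuine gap, and it does not close the way you hope. After the S-procedure your certificate is a $3n\times 3n$ quadratic form in $\xi=[e^{\mathrm{T}}\ \bar\phi_f^{\mathrm{T}}\ \bar\phi_g^{\mathrm{T}}]^{\mathrm{T}}$ whose lower-right block is $\diag\bigl(-(J_f)_{22},-(J_g)_{22}\bigr)=-I_{2n}$ and whose two off-diagonal blocks are $B_f \coloneqq P-\bigl[(J_f)_{21}(I-HC)\bigr]^{\mathrm{T}}$ and $B_g\coloneqq P-\bigl[(J_g)_{21}(I-KC)\bigr]^{\mathrm{T}}$. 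Its Schur complement is $S+2\alpha P+B_fB_f^{\mathrm{T}}+B_gB_g^{\mathrm{T}}$ with $S=(A-LC)^{\mathrm{T}}P+P(A-LC)$, whereas the Schur complement of (\ref{lmi}) is $S+2\alpha P+\tfrac12(B_f+B_g-P)(B_f+B_g-P)^{\mathrm{T}}$; these are different matrices, and neither condition implies the other in general (already visible in the scalar case $B_f=B_g$). So the augmented inequality does not ``collapse'' to (\ref{lmi}) by a Schur complement, and conversely feasibility of (\ref{lmi}) does not hand you the $3n\times 3n$ certificate you actually need for $\dot V+2\alpha V\le 0$. The paper takes a different route at exactly this point: it splits $\dot V$ into two $2n$-dimensional quadratic forms, one in $[e;\bar\phi_f]$ and one in $[e;\bar\phi_g]$, each absorbing half of $LC$ and half of $2\alpha P$, imposes the two separate matrix inequalities (\ref{non_lmi_f}) and (\ref{non_lmi_g}), applies the S-procedure to each, and then notes that adding the two yields (\ref{lmi}). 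Note that this last ``combining'' step only shows (\ref{lmi}) is implied by the pair, not that it implies them; a fully rigorous argument should either take (\ref{non_lmi_f})--(\ref{non_lmi_g}) as the hypothesis or verify directly that (\ref{lmi}) dominates the $3n\times 3n$ form, and neither your Schur-complement collapse nor the summation does this as written.
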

\begin{proof}
 Let $\mathcal{D}$ be an open subset of the set $ \{e \in\mathbb{R}^n: x,\hat{x} \in \mathcal{C}\}$ and consider the continuously differentiable candidate Lyapunov function, $V_{e} : \mathcal{D} \to  \mathbb{R}$ defined as
\begin{equation}
     V_e\left(e\right) = {e}^\mathrm{T}Pe, \label{eq:lyapunov function}
\end{equation}
which satisfies the following inequality
\begin{equation}\label{VeBound}
    \lambda_{\min}(P)\|e\|^2 \leq V_{e}\left(e\right) \leq \lambda_{\max}(P)\|e\|^2 
\end{equation}
where $\lambda_{\min}(\cdot)$ and  $\lambda_{\max}(\cdot)$ denote the minimum and maximum eigenvalues of a matrix, respectively. Since $P$ is a positive definite matrix, both eigenvalues are positive. 
On the set $\mathcal{D}$, the orbital derivative of the Lyapunov function along the trajectories of (\ref{aug_error}) can be expressed as
\begin{equation}
\dot{ V_e}\left(e\right) \ \coloneqq \ \begin{bmatrix}
e \\ \bar{\phi}_{f}
\end{bmatrix}^\mathrm{T}\begin{bmatrix}
\left(\begin{gathered}
\left(M_{f_{1}}-\frac{LC}{2}\right)^\mathrm{T}P\\
+P\left(M_{f_{1}}-\frac{LC}{2}\right)
\end{gathered}\right)
 &P\\P& 0
\end{bmatrix}  \begin{bmatrix}
e\\ \bar{\phi}_{f}
\end{bmatrix} +
 \begin{bmatrix}
e \\ \bar{\phi}_{g}
\end{bmatrix}^\mathrm{T}\begin{bmatrix}
\left(\begin{gathered}\left(M_{ug_1}-\frac{LC}{2}\right)^\mathrm{T}P\\
+P\left(M_{ug_1}-\frac{LC}{2}\right) \end{gathered}\right) & P\\ P& 0
\end{bmatrix}\begin{bmatrix}
e \\ \bar{\phi}_{g}
\end{bmatrix},
 \end{equation}
 Provided the matrix inequalities
% \cite{Boyd & Vandenberghe,2004}
\begin{equation}\label{non_lmi_f}
\begin{bmatrix}
\left(M_{f_{1}}-\frac{1}{2}\left(LC\right)\right)^\mathrm{T}P+P\left(M_{f_{1}}-\frac{1}{2}\left(LC\right)\right)+ \alpha P &P\\P& 0
\end{bmatrix} -
 \begin{bmatrix}
I-HC & 0\\ 0 & I
\end{bmatrix}^\mathrm{T}J_{f}\begin{bmatrix}
I-HC & 0\\ 0 & I
\end{bmatrix} \leq 0
\end{equation}
and 
\begin{equation}
\begin{bmatrix}\label{non_lmi_g}
\left(M_{ug_1}-\frac{1}{2}\left(LC\right)\right)^\mathrm{T}P+P\left(M_{ug_1}-\frac{1}{2}\left(LC\right)\right)+ \alpha P & P\\P& 0
\end{bmatrix} -
 \begin{bmatrix}
I-KC & 0\\ 0 & I
\end{bmatrix}^\mathrm{T}J_{g}\begin{bmatrix}
I-KC & 0\\ 0 & I
\end{bmatrix} \leq 0,
\end{equation}
are satisfied for some constant $\alpha > 0$, the multiplier matrices and sector conditions formulated in (\ref{matrixCondH}) and (\ref{matrixCondK}), and the S-Procedure Lemma \cite{SCC.Boyd1994} can be used to guarantee that the orbital derivative is bounded as (cf. \cite{SCC.Behcet.Martin.ea2008})
\begin{equation}\label{VeIneq}
    \dot{V}_{e}\left(e\right) \leq -2\alpha V_{e}\left(e\right), \forall e \in \mathcal{D}.
\end{equation}
Combining (\ref{non_lmi_f}) and (\ref{non_lmi_g}), the matrix inequality in (\ref{lmi}) is obtained.
%Hence, provided the LMI in (\ref{lmi}), needed to compute the symmetric positive definite matrix, $P$, and the three observer gains, $L$, $H$, and $K$, is feasible then the assumptions of \cite[Theorem~4.9]{SCC.Khalil2002} are satisfied with the inequality in (\ref{sLemmaCond}) bounded using (\ref{VeBound}) as
%\begin{equation}\label{VeIneq}
 %    \dot{V}\left(e\right) \leq -2\alpha V\left(e\right) \leq -\lambda_{\min}(P)\|e\|^2,\forall e \in \mathcal{D}.
%\end{equation}
Using the bound in \eqref{VeIneq}, it can be concluded that the origin of the error system, $e = 0$, is locally uniformly asymptotically stable.
In particular, let $r \in \mathbb{R}_{> 0}$ be a constant such that, $B_{r} \coloneqq \{e\in\mathbb{R}^n \mid \|e\| \leq r\} \subset \mathcal{D}$
and, $W_{2}(\|e\|)\coloneqq \lambda_{\min}(P)\|e\|^2$. Select $c > 0$ such that $c < \frac{r^2\lambda_{\min}(P)}{2}$, Theorem 4.9 in \cite{SCC.Khalil2002} can then be invoked to conclude that every trajectory starting in $\{e \in B_{r} \mid W_{2}(\|e\|) \leq c\}$ stays within $\mathcal{D}$ for all $t\geq 0$ and satisfies
\begin{equation}
    \|e(t)\| \leq \beta(\|e(t_{0})\|,t-t_{0}), \forall t \geq t_{0} \geq 0
\end{equation}
where $\beta$ is a class $\mathcal{KL}$ function.
\end{proof}
\begin{remark}
    The matrix inequality can be reformulated as a linear matrix inequality (LMI) using the typical variable substitution method. Indeed, substituting $L = P^{-1}R$ in \eqref{lmi}, the matrix $P$ and the observer gains $L$, $H$, and $K$ can be obtained by solving the LMI
    \begin{equation}\label{lmi2}
    \begin{bmatrix}
    A^\mathrm{T}P+PA-C^\mathrm{T}R^\mathrm{T}-RC + 2\alpha P & P-{J_{21}}^\mathrm{T} \\P-J_{21}  & -\left((J_f)_{22}+ (J_g)_{22}\right)
    \end{bmatrix} \leq 0
    \end{equation}
    for $P$, $R$, $H$, and $K$.
\end{remark}
\begin{remark}
    The observer design is only valid if the control input remains bounded and the system trajectories remain within the compact set $\mathcal{C}$ where the bounds on the derivatives of $\bar f$ and $\bar g$, in (\ref{aug_jac_f}) and (\ref{aug_jac_g}), respectively, are valid.  In the theorem above, the derivative bounds are local, and as a result the observer error is locally uniformly asymptotically stable.  If the derivative bounds hold globally, then a similar analysis can be used to show that the observer error is globally uniformly asymptotically stable. The controller designed in Section ~\ref{section:controlDesign} ensures that provided the initial condition is close enough to the origin, the trajectories stay within the desired compact set $\mathcal{C}$.
\end{remark}

\section{Control Design}
\label{section:controlDesign}
To achieve the control objective stated above  while satisfying all constraints of the system, the cost functional to be minimized is given as
\begin{equation} \label{costfunction}
J(x,u(\cdot)) \coloneqq 	\int_{0}^\infty  Q(\phi(\tau, x, u_{[t,\infty)}(\cdot))) + U(u(\tau))d\tau,
\end{equation} over the set $\mathcal{U} $ piecewise continuous functions $t \to u(t)$, $\forall t \in [0, \infty)$ 
where $\phi(t, x, u(\cdot))$ is a solution of (\ref{eq:dynamics_x}) under control signal $u(\cdot)$ starting from $x$, $Q:\mathbb{R}^{n}\to\mathbb{R}$ is a continuous, positive definite function and $U:\mathbb{R}^{m}\to\mathbb{R}$, introduced to address the saturation constraint on the control, is defined as
\begin{equation} \label{eq: positive function}
    U(u) \coloneqq 2	\int_{0}^{u} (\bar{\lambda}\tanh^{-1}(\upsilon/\bar{\lambda}))^\mathrm{T}Rd\upsilon,
\end{equation}
where $R \coloneqq \diag(r_1,\hdots, r_m).$
Assuming the optimal controller exists, then let the optimal value function, $V^{*}: \mathbb{R}^{n} \times \mathbb{R}^m  \rightarrow \mathbb{R} $, be expressed as  \begin{equation}V^{*}(x):= \min_{u(\cdot)\in \mathcal{U}_{[t,\infty)}}\int_{t}^\infty Q(\phi(\tau, x, u_{[t,\tau)}(\cdot))) + U(u(\tau))d\tau, \label{eq:valuefunction}\end{equation} where $u_I$ and $\mathcal{U}_I$ are obtained by restricting the domains of $u$ and functions in $\mathcal{U}_I$ to the interval
$I \subseteq R$, respectively.
 Assuming that the optimal value function is continuously differentiable, it can be shown to be the unique PD solution of the Hamilton-Jacobi-Bellman (HJB) equation, \cite[Theorem 1.5]{SCC.Kamalapurkar.Walters.ea2018},
\begin{equation}\label{HJB} 
\min_{u\in\mathbb{R}^m} \Big(\nabla_x V\left(f(x)+g(x)u\right)+Q(x)+U(u)\Big) = 0,
\end{equation}
where $\nabla_{\left(\cdot\right)} \coloneqq  \frac{\partial}{\partial \left(\cdot\right)}$.\\
Therefore, the optimal controller is given by the feedback policy, $u(t) = u^*(\phi(t, x, u_{[0,t)}))$  where $u^{*}: \mathbb{R}^n \to \mathbb{R}^m$ defined as
\begin{equation}\label{eq:optimalcontrol}
    u^{*}(x) := -\bar{\lambda}\tanh(D^*),
\end{equation}
where $D^*=(1/2\bar{\lambda})R^{-1}g(x)^\mathrm{T}\nabla_{x}V^{*}(x)\in\mathbb{R}^m$. Substituting  equation (\ref{eq:optimalcontrol}) in (\ref{eq: positive function}), the function $U$ is given as
\begin{equation} \label{eq:optimalstarU}
U(u^*) \coloneqq \bar{\lambda}\nabla_{x}{V^{*}}^\mathrm{T}(x)g(x)\tanh(D^*)+\bar{\lambda}^2\bar{R}\ln(\boldsymbol{1}-\tanh^2(D^*)),
\end{equation}
where $\bar{R} \coloneqq [r_1,\hdots,r_m] \in \mathbb{R}^{1\times m}$ and $\boldsymbol{1}$ denotes a column vector having all of its elements equal to one.
Substituting optimal control input, (\ref{eq:optimalcontrol}) in  (\ref{HJB}), the following equation is obtained,
\begin{equation}\label{eq:optimalPerformFunc}
   \nabla_x V^*\left(f(x)+g(x)u^{*}(x)\right)+Q(x)+U(u^*) = 0
\end{equation}

\subsection{Value Function Approximation }

Solving the above HJB equation is generally infeasible due to its inherent non-linearity, hence to find an approximate solution, estimates of the value function and the control policy are introduced. Using  the value function can be expressed as
\begin{equation} \label{eq:optimalV}
    V^{*}\left(x\right)=W^\mathrm{T}\sigma\left(x\right)+\epsilon\left(x\right),
\end{equation}
where
\begin{equation} \label{eq:gradOptimalV}
    \nabla_{x}V^{*}\left(x\right)= {\nabla_{x}\sigma}^\mathrm{T}\left(x\right)W+\nabla_{x}\epsilon\left(x\right),
\end{equation}
$W\in\mathbb{R}^{L}$ is an unknown vector of bounded weights, $\sigma:\mathbb{R}^{2n}\rightarrow\mathbb{R}^{L}$ is a vector of continuously differentiable nonlinear activation functions such that $\sigma\left(0\right)=0$ and $\nabla_{x} \sigma \left(0\right)=0$, $L\in\mathbb{N}$ is the number of basis functions, and $\epsilon:\mathbb{R}^{2n}\rightarrow\mathbb{R}$ is the reconstruction error. Using the Stone-Weierstrass Theorem\cite[Theorem 1.5]{SCC.Sauvigny2012}, the activation functions $\sigma$ can be selected so that the weights and the approximation errors satisfy $\sup_{x \in \mathcal{C}}\|W\| \leq \bar{W}$, $ \sup_{x \in \mathcal{C}}\|\sigma(\cdot)\| \leq \bar{\sigma}$, $ \sup_{x \in \mathcal{C}}\|\nabla_{(\cdot)}\sigma(\cdot)\| \leq \bar{\sigma}, \sup_{x \in \mathcal{C}}\|\epsilon(\cdot)\| \leq \bar{\epsilon}$ and $\sup_{x \in \mathcal{C}}\|\nabla_{(\cdot)}\epsilon(\cdot)\| \leq \bar{\epsilon}$

The optimal controller can be then expressed as
\begin{equation} \label{eq:optimalU}
u^*\left(x\right)=-\bar{\lambda}\tanh\left((1/2\lambda)R^{-1}g(x)^\mathrm{T}\nabla_{x}V^{*}\left(x\right)\right).
\end{equation}
Since the ideal weights, $W$, are unknown, let the estimates $\hat{V}:\mathbb{R}^{n}\times\mathbb{R}^{L}\to\mathbb{R}$ and $\hat{u}:\mathbb{R}^{n}\times\mathbb{R}^{L}\to\mathbb{R}^{m}$ be defined as
\begin{gather}
\hat{V}\left(\hat{x},\hat{W}_{c}\right)\coloneqq\hat{W}_{c}^\mathrm{T}\sigma\left(\hat{x}\right),\label{V_app}\\
\hat{u}\left(\hat{x},\hat{W}_{c}\right)\coloneqq-\bar{\lambda}\tanh(\hat{D}),\label{u_app}
\end{gather}
where $\hat{D}= \frac{1}{2\lambda}R^{-1}g(\hat{x})^\mathrm{T}\nabla_{\hat{x}}\sigma(\hat{x})^\mathrm{T}\hat{W}_{c}.$ The critic weights, $\hat{W}_{c}\in\mathbb{R}^{L}$ are an estimate of the ideal weights, $W$. 

Substituting (\ref{V_app}) and (\ref{u_app}) into (\ref{HJB}) the residual term, $\hat{\delta}: \mathbb{R}^{2n} \times \mathbb{R}^{L} \times \mathbb{R}^{L} \rightarrow \mathbb{R}$, referred to as the Bellman error (BE), is obtained as
\begin{equation} \label{BE1}
    \hat{\delta}(\hat{x},\hat{W}_{c}) \coloneqq  \nabla_{x}\hat{V}(\hat x,\hat{W}_{c})\left(f(\hat{x})+g(\hat{x})\hat{u}(\hat{x},\hat{W}_{c})\right) + U(\hat{u}) + Q(\hat{x}).
\end{equation}
Substituting the approximate form of the HJB in (\ref{eq:optimalPerformFunc}) into (\ref{BE1}), $
    \hat{\delta}(\hat{x},\hat{W}_{c}) =  \nabla_{x}\hat{V}(\hat{x} ,\hat{W}_{c})\left(f(\hat{x})+g(\hat{x})\hat{u}(\hat{x},\hat{W}_{c})\right) + U(\hat{u})- \nabla_x V^*\left(f(\hat{x})+g(\hat{x})u^{*}\right)-U(u^*).$
    
For notational brevity, the dependence of functions on $x$ and $u$ is omitted hereafter.  Substituting  (\ref{V_app}) in the equation above, and using fact that $\ln(\boldsymbol{1}-\tanh^2(D^{*})) = \ln(4)-2D^{*}\sign(D^{*})+\varepsilon_D^{*}$ and $\ln(\boldsymbol{1}-\tanh^2(\hat{D})) = \ln(4)-2\hat{D}\sign(\hat{D})+\varepsilon_{\hat{D}}$, where $\varepsilon_D$ is the error of approximating a $\sign$ function with a $\tanh$ function and  satisfies$\|\hat\varepsilon_D\| \leq \ln(4)$. Hence, $\ln(\boldsymbol{1}-\tanh^2(\hat{D}))-\ln(\boldsymbol{1}-\tanh^2(D^{*})) = 2D^*\sign(D^*)-2\hat{D}\sign(\hat{D})+{\varepsilon}_{\hat{D}}-{\varepsilon}_{D^*}$.
The error $\hat{U}-U^*$ can thus be expressed as 
\begin{multline}
    \hat{U}-U^* = 2\bar{\lambda}^2\bar{R}\left(D^*\sign(D^*) -\hat{D}\sign(\hat{D})\right) + \bar{\lambda}\hat{W}_c^\mathrm{T}\nabla\sigma \hat{g}\tanh(\hat{D})-\bar{\lambda}W^\mathrm{T}\nabla\sigma \hat{g}\tanh(D^*)- \bar{\lambda}\nabla\epsilon\hat{g}\tanh(D^*)\\+\bar{\lambda}^2\bar{R}({\varepsilon}_{\hat{D}} -{\varepsilon}_{D^*}).\end{multline}
Since $D^*\sign(D^*) = |D^*|$ and $-D^* \leq D^*\tanh(D^*) \leq D^*$, then  $D^*\sign(D^*)-D^*\tanh(D^*)= C_{D^*}$, where the error $C_{D^*}$ satisfies $C_{D^*} \geq 0$. Using the same fact,  $\hat{D}\sign(\hat{D})= C_{\hat{D}}$ which satisfies $C_{\hat{D}} \geq 0$. Thus, the difference can be expressed as, $D^*\sign(D^*)-\hat{D}\sign(\hat{D}) = \frac{1}{2}R^{-1}W\nabla\sigma\hat{g}\tanh(D^*)+\frac{1}{2}R^{-1}\nabla\epsilon\hat{g}\tanh(D^*)-\frac{1}{2}R^{-1}\hat{W}_c\nabla\sigma\hat{g}\tanh(\hat{D}) + C_{D^*}-C_{\hat{D}}$. Hence, the BE can be expressed as
\begin{equation}
    \hat{\delta} \coloneqq -\omega^\mathrm{T}\tilde{W}_c+ \Delta,
\end{equation}
where $\omega \coloneqq \nabla\sigma \left(\hat{f}+\hat{g}\hat{u}\right)$ and $\Delta \coloneqq -\nabla\epsilon{\left(\hat{f}+\hat{g}u^*\right)}+\bar{\lambda}W^\mathrm{T}\nabla\sigma\hat{g}\left(\tanh(D^*)- \tanh(\hat{D})\right)+2\bar{\lambda}^2\bar{R}(C_{D^*}-C_{\hat{D}})+\bar{\lambda}^2\bar{R}({\varepsilon}_{\hat{D}}-{\varepsilon}_{D^*}).$

To accurately approximate the value function, online RL methods require persistence of
excitation (PE) condition \cite{SCC.Modares.Lewis.ea2013, SCC.Kamalapurkar.Rosenfeld.ea2016}. which is difficult to guarantee in practice since the
 system needs to explore several points in the state space. However, through BE extrapolation for excitation via simulation, stability, and convergence of online RL can be established using Assumption~\ref{ass:CLBCADPLearnCond}.
To simulate experience using BE extrapolation,  select a set of trajectories $\left\{ x_{i}: \mathbb{R}_{\geq t_0} \to \mathbb{R}^n \mid i=1,\cdots, N\right\}$ and extrapolate the BE along these trajectories to yield the BEs, $\hat{\delta}: \mathbb{R}^{2n} \times \mathbb{R}^{L} \times \mathbb{R}^{L} \rightarrow \mathbb{R}$, given by
 \begin{equation} \label{BE1_i}
    \hat{\delta_i}(x_i,\hat{W}_{c}) \coloneqq  \nabla_{x}\hat{V}(x_i,\hat{W}_{c})\left(f\left(x_i\right)+g\left(x_i\right)\hat{u}(x_i,\hat{W}_{c})\right)+U(\hat{u}) + Q(x_i).
\end{equation}
Given the critic weight estimation error $\tilde{W}_{c} \coloneqq W -\hat{W}_{c}$ and substituting \eqref{V_app} and \eqref{u_app} into (\ref{HJB}), and subtracting from \eqref{BE1}, the BE can be expressed as
\begin{equation}
    \hat{\delta_i} \coloneqq -\omega_i^\mathrm{T}\tilde{W}_c+ \Delta_i,
\end{equation}
where $\hat{f}_i \coloneqq  f\left(x_i\right)$, $\hat{g}_i \coloneqq g\left(x_i\right)$, $\sigma_{i} \coloneqq \sigma (x_{i})$, $\omega_i \coloneqq \nabla\sigma_i \left(\hat{f}_i+\hat{g}_i\hat{u}(x_i,\hat{W}_{c})\right)$, $\Delta_i \coloneqq -\nabla\epsilon_{i}{\left(\hat{f}_i+\hat{g}_iu^{*}(x_{i})\right)} \\+\bar{\lambda}W^\mathrm{T}\nabla\sigma_{i}\hat{g}_i  \left(\tanh({D_i}^*)- \tanh(\hat{{D}_{i})}\right)+2\bar{\lambda}^2\bar{R}(C_{{D_i}^*}-C_{\hat{D}_{i}})+\bar{\lambda}^2\bar{R}({\varepsilon}_{\hat{{D_i}}}-{\varepsilon}_{{D_i}^*})$, $\nabla\epsilon_{i} = \nabla\epsilon(x_{i})$.

\subsection{Update laws for Critic weights}
To guarantee that the estimated value function weights, $\hat{W}_c$, converge to their ideal weights in (\ref{eq:optimalV}), the  estimated value function weights are updated based on the result of the stability analysis in Section~\ref{section:stabilityAnalysis} as
\begin{align}
    \dot{\hat{W}}_{c} &=- \frac{k_{c}}{N}\Gamma\sum_{i=1}^{N}\frac{\omega_{i}}{\rho_{i}}\hat\delta_{i},\label{W_c}\\
    \dot{\Gamma} &= \beta\Gamma- \frac{k_{c}}{N}\Gamma\sum_{i=1}^{N}\frac{\omega_{i}\omega_{i}^\mathrm{T}}{\rho_{i}^{2}}\Gamma,\label{gamma}
\end{align}
with $\Gamma\left(t_{0}\right)=\Gamma_{0}$, where $\Gamma:\mathbb{R}_{\geq t_{0}} \to \mathbb{R}^{L\times L}$
is a time-varying least-squares gain matrix, $\rho_{i}\left(t\right)\coloneqq 1+\gamma\omega_{i}^\mathrm{T}\left(t\right)\omega_{i}\left(t\right)$, $\gamma > 0$ is a constant positive normalization gain,  $\beta > 0 \in \mathbb{R}$ is a constant forgetting factor, and $k_{c} > 0 \in \mathbb{R}$ is a constant adaptation gain.

\section{Stability Analysis}\label{section:stabilityAnalysis}
In this section, stability analysis of the observer-controller RL architecture will be carried out using Lyapunov methods.
To facilitate the stability analysis, the following verifiable  persistence of
excitation (PE) rank condition is utilized in the
stability analysis
\begin{assum}
    \label{ass:CLBCADPLearnCond}There exists a constant $\underline{c}_{1}$ such that the finite set of trajectories $\left\{x_{i}: \mathbb{R}_{\geq t_0} \mid i=1,\hdots,N\right\}$satisfies
    \begin{equation}
    0 < \underline{c}_{1} \leq\inf_{t\in\mathbb{R}_{\geq T}}\lambda_{\min}\left(\frac{1}{N}\sum_{i=1}^{N}\frac{\omega_{i}\left(t\right)\omega_{i}^\mathrm{T}\left(t\right)}{\rho_{i}^{2}\left(t\right)}\right).\label{eq:CLBCPE2}
    \end{equation}
\end{assum}

As described in\cite{SCC.Mahmud.Nivison.ea2021}, since $\omega_{i}$ is a function of the $x$ and $\hat{W}_{c}$,  Assumption \ref{ass:CLBCADPLearnCond} cannot be guaranteed a priori. However, unlike the PE condition utilized in \cite{SCC.Vamvoudakis.Lewis2010}, Assumption \ref{ass:CLBCADPLearnCond} can be verified online. Furthermore, since $\lambda_{\min}\left(\sum_{i=1}^{N}\frac{\omega_{i}\left(t\right)\omega_{i}^\mathrm{T}\left(t\right)}{\rho_{i}^{2}\left(t\right)}\right)$ is non-decreasing in the number of samples, $N$, Assumption \ref{ass:CLBCADPLearnCond} can be met, heuristically, by increasing the number of samples. The calculation of a precise bound on the number of samples is out of the scope of this paper.

 Let $Z \coloneqq [x^\mathrm{T}, e^\mathrm{T}, {\tilde{W}_c}]^\mathrm{T}$ represent the concatenated state of the closed-loop system and let a continuously differentiable candidate Lyapunov function, $V_L: R^{2n+L} \times \mathbb{R}_{\geq 0} \to \mathbb{R}$, be defined as,
 \begin{equation}\label{eq:Lyap}
     V_L\left(Z, t\right) \coloneqq V^*\left(x\right) + \frac{1}{2}\tilde{W}_{c}^\mathrm{T}\Gamma^{-1}(t)\tilde{W}_{c} + V_{e}\left(e\right),
 \end{equation}
 where $V^*$ represent the optimal value function and $V_e$ is introduced in (\ref{eq:lyapunov function}) in Section ~\ref{section:stateEstimator}. 
 To facilitate the stability analysis, let $\underline{c} \in \mathbb{R}_{>0}$ be a constant defined as
\begin{equation} \label{c_}
       \underline{c} \coloneqq\frac{\beta}{2\Gamma k_c} + \frac{\underline{c}_1}{2},
\end{equation}
and $\iota \in \mathbb{R}$ be a positive constant defined as 
\begin{equation}
      \iota = \frac{L_{g\sigma}^2\overline{W}^2}{2\bar{\lambda}^2\lambda_{\min}(P)}+ \frac{3\overline{\|G_{r\sigma}\|}^2}{4\lambda^2k_c\underline{c}}+(1/2\lambda)\overline{\|G_{r}\|}\|\overline{\|\nabla\epsilon\|}+\bar{\lambda}L_g\overline{\|\nabla\epsilon\|}+\frac{3k_{c}}{4\underline{c}}\|\frac{\omega_{i}}{\rho_{i}}\|^2\|\Delta_{i}\|^2.
 \end{equation}
As shown in \cite[Lemma~1]{SCC.Kamalapurkar.Rosenfeld.ea2016}, provided (\ref{ass:CLBCADPLearnCond}) holds and $\lambda_{\min}\{{\Gamma_{0}^{-1}}\}> 0$, the update law in (\ref{gamma}) ensures that the least squares update law satisfies
\begin{equation}\label{eq:OFBADP1Gammabound}
	\underline{\Gamma}I_{L}\leq\Gamma\left(t\right)\leq\overline{\Gamma}I_{L},		\end{equation}
$\forall t\in\mathbb{R}_{\geq 0}$ and for some  $\overline{\Gamma},\underline{\Gamma}>0$. Using the bound in \eqref{eq:OFBADP1Gammabound} and since the candidate Lyapunov function is  positive definite, \cite[Lemma 4.3]{SCC.Khalil2002} can be used to conclude that it is bounded as
\begin{equation}
\underline{v}\left(\left\Vert Z\right\Vert \right)\leq V_{L}\left(Z,t\right)\leq\overline{v}\left(\left\Vert Z\right\Vert \right),\label{eq:OFBADPVBound}
\end{equation}
for all $t \in \mathbb{R}_{\geq 0}$ and for all $Z\in\mathbb{R}^{2n+L}$, where $\underline{v},\overline{v}:\mathbb{R}_{\geq 0}\rightarrow\mathbb{R}_{\geq 0}$ are class $\mathcal{K}$ functions. Let $\upsilon_{l}: \mathbb{R}_{\geq 0} \to \mathbb{R}_{\geq 0}$ be a class $\mathcal{K}$ function such that \begin{equation}
    \upsilon_{l}\left(\|Z\|\right) \leq \frac{\lambda_{\min}(Q)}{2}\|x\|^2+\frac{k_{c}\underline{c}}{6}\|\tilde{W}_{c}\|^2 + \frac{\lambda_{\min}(P)}{4}\|e\|^2. 
\end{equation}
\begin{theorem}
   Provided Assumptions \ref{ass:jacobianbounds} and \ref{ass:CLBCADPLearnCond} hold, the conditions specified in Theorem~\ref{thm:stateobserver} are satisfied, the control gains are selected large enough based on the sufficient condition \footnote{Despite the fact that $\iota$ generally increases with increasing $\zeta$, the condition in (\ref{zeta_cond}) can be satisfied provided the points for BE extrapolation are selected such that $\underline{c}$, introduced in (\ref{c_}) and control gain, $k_{c}$ is large enough, and the basis for the value function approximation are selected such that $\overline{\|\epsilon\|}$ and $\overline{\|\nabla{\epsilon\|}}$ are sufficiently small.} 
\begin{gather}
  {\upsilon_l}^{-1}\left(\iota\right) \leq  {\overline{\upsilon}}^{-1}\left(  \underline{\upsilon}\left(\zeta\right)\right),\label{zeta_cond}
\end{gather}
and the weights $\hat{W}_c$ and $\Gamma$ are updated according to (\ref{W_c}) and (\ref{gamma}), respectively, then the concatenated state, $Z$, is locally uniformly ultimately bounded under the controller designed in (\ref{u_app}).
\end{theorem}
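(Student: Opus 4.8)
The plan is to study the orbital derivative of the composite Lyapunov function $V_{L}$ in (\ref{eq:Lyap}) along the closed loop formed by the plant (\ref{eq:dynamics_x}), the estimator (\ref{eq:observerdynamics_x}), the policy $\hat{u}$ in (\ref{u_app}), and the update laws (\ref{W_c})--(\ref{gamma}), and to show that it is bounded as $\dot{V}_{L} \leq -\upsilon_{l}(\|Z\|) + \iota$ on a neighborhood of the origin contained in the region where all the Jacobian and approximation bounds are valid; the ultimate boundedness claim then follows from the sandwich bound (\ref{eq:OFBADPVBound}). I would differentiate $V_{L}$ term by term. For $V^{*}(x)$, substituting $\dot{x} = f + g\hat{u}$, adding and subtracting $\nabla_{x}V^{*} g\, u^{*}(x)$, and using the HJB identity (\ref{eq:optimalPerformFunc}) gives $\dot{V}^{*} = -Q(x) - U(u^{*}) + \nabla_{x}V^{*} g\,(\hat{u} - u^{*})$; discarding the nonnegative $-U(u^{*})$ and decomposing the policy error $\hat{u}(\hat{x},\hat{W}_{c}) - u^{*}(x)$ via the mean value theorem and the Lipschitz continuity of $\tanh$, $\sigma$, $\nabla\sigma$, and $g$ on $\mathcal{C}$ yields a bound of the form $c_{1}\|\tilde{W}_{c}\| + c_{2}\|e\| + c_{3}$, with the constants $c_{i}$ governed by $\overline{W}$, the bounds on $g$, and $\overline{\|\nabla\epsilon\|}$.

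For the critic term I would use $\tfrac{d}{dt}\Gamma^{-1} = -\beta\Gamma^{-1} + \tfrac{k_{c}}{N}\sum_{i}\tfrac{\omega_{i}\omega_{i}^{\mathrm{T}}}{\rho_{i}^{2}}$ together with $\dot{\tilde{W}}_{c} = -\dot{\hat{W}}_{c}$ from (\ref{W_c}) and the identity $\hat{\delta}_{i} = -\omega_{i}^{\mathrm{T}}\tilde{W}_{c} + \Delta_{i}$. The weight-update contribution produces $-\tfrac{k_{c}}{N}\tilde{W}_{c}^{\mathrm{T}}\bigl(\sum_{i}\tfrac{\omega_{i}\omega_{i}^{\mathrm{T}}}{\rho_{i}}\bigr)\tilde{W}_{c}$ plus a cross term $\tfrac{k_{c}}{N}\tilde{W}_{c}^{\mathrm{T}}\sum_{i}\tfrac{\omega_{i}\Delta_{i}}{\rho_{i}}$, while the $\Gamma^{-1}$-derivative contributes $-\tfrac{\beta}{2}\tilde{W}_{c}^{\mathrm{T}}\Gamma^{-1}\tilde{W}_{c} + \tfrac{k_{c}}{2N}\tilde{W}_{c}^{\mathrm{T}}\bigl(\sum_{i}\tfrac{\omega_{i}\omega_{i}^{\mathrm{T}}}{\rho_{i}^{2}}\bigr)\tilde{W}_{c}$. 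Using $\rho_{i}\geq 1$, the bound $\overline{\Gamma}$ from (\ref{eq:OFBADP1Gammabound}), and the excitation lower bound $\underline{c}_{1}$ from Assumption~\ref{ass:CLBCADPLearnCond}, these combine to $-k_{c}\underline{c}\,\|\tilde{W}_{c}\|^{2}$ with $\underline{c}$ exactly as in (\ref{c_}), plus the $\Delta_{i}$ cross term. Since $\nabla\epsilon$, $W$, $\sigma$, $g$ and the $\tanh$/$\sign$ gap are all bounded on $\mathcal{C}$, $\|\Delta_{i}\|$ is finite; Young's inequality then splits the cross term into a piece absorbed by $-k_{c}\underline{c}\,\|\tilde{W}_{c}\|^{2}$ and a constant contributing the $\tfrac{3k_{c}}{4\underline{c}}\|\tfrac{\omega_{i}}{\rho_{i}}\|^{2}\|\Delta_{i}\|^{2}$ term of $\iota$. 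For the estimator term, since $\hat{u}$ obeys the componentwise saturation bound $\bar{\lambda}$ by construction and $x,\hat{x}\in\mathcal{C}$ on the neighborhood under consideration, Theorem~\ref{thm:stateobserver} applies and gives $\dot{V}_{e} \leq -2\alpha V_{e} \leq -2\alpha\lambda_{\min}(P)\|e\|^{2}$.

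Adding the three estimates and applying Young's inequality to each remaining cross term---absorbing the $\|\tilde{W}_{c}\|$-proportional terms into $-k_{c}\underline{c}\,\|\tilde{W}_{c}\|^{2}$, the $\|e\|$-proportional terms into $-2\alpha\lambda_{\min}(P)\|e\|^{2}$, and collecting all residual constants into $\iota$---produces $\dot{V}_{L} \leq -\tfrac{\lambda_{\min}(Q)}{2}\|x\|^{2} - \tfrac{k_{c}\underline{c}}{6}\|\tilde{W}_{c}\|^{2} - \tfrac{\lambda_{\min}(P)}{4}\|e\|^{2} + \iota \leq -\upsilon_{l}(\|Z\|) + \iota$, which is strictly negative whenever $\|Z\| > \upsilon_{l}^{-1}(\iota)$. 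Combining this with (\ref{eq:OFBADPVBound}) and a standard uniform ultimate boundedness result (e.g.\ \cite[Theorem~4.18]{SCC.Khalil2002}) yields that $Z$ is UUB with ultimate bound $\underline{v}^{-1}\bigl(\overline{v}(\upsilon_{l}^{-1}(\iota))\bigr)$, and condition (\ref{zeta_cond}) is precisely what ensures that both the sublevel set approached asymptotically and the sublevel set containing the initial condition remain inside the region of radius $\zeta$ on which the Jacobian and basis-function bounds hold, so the analysis is self-consistent.

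The step I expect to be the main obstacle is the locality/forward-invariance bookkeeping: the observer estimate of Theorem~\ref{thm:stateobserver} and the network bounds $\overline{\sigma},\overline{\epsilon}$ only hold while $x$ and $\hat{x}=x-e$ stay in $\mathcal{C}$, yet proving they stay there requires $\dot{V}_{L}$ to be sign definite on that region, which in turn uses those bounds. Resolving this cleanly requires framing the argument as a forward-invariance statement on a sublevel set of $V_{L}$ and restricting the initial condition through (\ref{zeta_cond}); the accompanying delicate point is verifying that the $\|\tilde{W}_{c}\|$- and $\|e\|$-proportional cross terms can actually be dominated by $-k_{c}\underline{c}\,\|\tilde{W}_{c}\|^{2}$ and $-2\alpha\lambda_{\min}(P)\|e\|^{2}$ for an admissible choice of $k_{c}$, $\underline{c}$, $\alpha$ and sufficiently small $\overline{\|\epsilon\|}$, $\overline{\|\nabla\epsilon\|}$, which is exactly the content of the footnote attached to (\ref{zeta_cond}).
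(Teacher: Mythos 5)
Your proposal is correct and follows essentially the same route as the paper's proof: the same composite Lyapunov function, the same use of the HJB identity to produce $-Q(x)-U(u^{*})$ plus a policy-mismatch term, the same algebra for the $\Gamma^{-1}$ and weight-update contributions yielding $-k_{c}\underline{c}\|\tilde{W}_{c}\|^{2}$, the observer bound $\dot{V}_{e}\leq-2\alpha V_{e}$ from Theorem~\ref{thm:stateobserver}, Young's inequality to collect residuals into $\iota$, and \cite[Theorem~4.18]{SCC.Khalil2002} with condition (\ref{zeta_cond}) handling the forward-invariance of the region where the Jacobian and approximation bounds hold. Your explicit discussion of the locality bookkeeping is in fact slightly more careful than the paper's, but it is not a different argument.
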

\begin{proof}
% The bound of $f$ and the NN function approximation error depend on the underlying compact set, hence, $\iota$ is a function of $\zeta$. 
Let $\chi \subset \mathcal{C} \times \mathcal{D} \times \mathbb{R}^L$ be an open set. The orbital derivative of the candidate Lyapunov function, $V_L$, along the trajectories of (\ref{eq:dynamics_x}), (\ref{aug_error}), (\ref{W_c}) and (\ref{gamma})  is given by,
\begin{equation}\label{eq:LyapD}
     \dot{V}_L\left(Z, t\right) =  {\nabla}{V^*}\dot{x} - \tilde{W}_{c}^\mathrm{T}\Gamma^{-1}\dot{\hat{W}}_{c}-\frac{1}{2} \tilde{W}_{c}^\mathrm{T}\Gamma^{-1}\dot{\Gamma}\Gamma^{-1}\tilde{W}_{c} + \dot{V}_e.
\end{equation}
Substituting (\ref{eq:dynamics_x}), (\ref{W_c}) and (\ref{gamma}) in (\ref{eq:LyapD}), using (\ref{M_f}), (\ref{eq:optimalPerformFunc}), (\ref{M_g}), (\ref{eq:optimalcontrol}), (\ref{u_app}), (\ref{VeIneq}), and the fact that $\frac{\omega_{i}{\omega_{i}}^\mathrm{T}}{\rho_{i}^2}\leq \frac{\omega_{i}{\omega_{i}}^\mathrm{T}}{\rho_{i}}$, the orbital derivative can be bounded, on the set $\chi\times\mathbb{R}_{\geq 0}$, as
 %  \begin{multline}
 %     \dot{V}_L = {\nabla}{V^*}\left(f+g\hat{u}\right) \\+ \tilde{W}_{c}^\mathrm{T}\Gamma^{-1}\left(\frac{k_{c}}{N}\Gamma\sum_{i=1}^{N}\frac{\omega_{i}}{\rho_{i}}\left(-\omega_i^\mathrm{T}\tilde{W}_c+ \Delta_i\right)\right)\\-\frac{1}{2} \tilde{W}_{c}^\mathrm{T}\Gamma^{-1}\left(\beta\Gamma-\frac{k_{c}}{N}\Gamma\sum_{i=1}^{N}\frac{\omega_{i}\omega_{i}^\mathrm{T}}{\rho_{i}^{2}}\Gamma\right)\Gamma^{-1}\tilde{W}_c +\dot{V}_e
 % \end{multline}
 % It is known that $\frac{\omega_{i}{\omega_{i}}^\mathrm{T}}{\rho_{i}^2}\leq \frac{\omega_{i}{\omega_{i}}^\mathrm{T}}{\rho_{i}}$. Using (\ref{VeIneq}),   (\ref{M_f}),  (\ref{M_g}), (\ref{eq:optimalPerformFunc}), (\ref{eq:optimalcontrol}), (\ref{u_app}) and (\ref{VeIneq}), $\dot{V}_L$ can be expressed as,
\begin{multline} \label{eq:dot_V_l}
     \dot{V}_L(Z,t) \leq -Q(x)-U(u^*)+\bar{\lambda}W^\mathrm{T}{\nabla}\sigma g(x)\left(\tanh(D^*)-\tanh(\hat{D})\right) +\bar{\lambda}\nabla\epsilon g(x) \tanh(D^*)- \frac{k_{c}}{2N}\tilde{W}_{c}^\mathrm{T}\sum_{i=1}^{N}\frac{\omega_{i}{\omega_{i}}^\mathrm{T}}{\rho_{i}}\tilde{W}_c \\+
     \frac{k_{c}}{N}\tilde{W}_{c}^\mathrm{T}\sum_{i=1}^{N}\frac{\omega_{i}}{\rho_{i}}\Delta_{i}-\frac{1}{2}\beta \tilde{W}_{c}^\mathrm{T}\Gamma^{-1}\tilde{W}_c - \lambda_{\min}(P)\|e\|^2.
 \end{multline}
 The bound in \eqref{eq:dot_V_l} utilizes the facts that $U(u^*)$ is a positive definite function and
\begin{equation}
     \left\|\tanh(D^*) - \tanh(\hat{D})\right\| \leq  (1/2\lambda)L_{g\sigma}\overline{W}\|e\|+(1/2\lambda)\overline{\|G_{r\sigma}\|}\|\tilde{W}_{c}\| +(1/2\lambda)\overline{\|G_{r}\|}\|\overline{\|\nabla\epsilon\|},
\end{equation}
 where $G_{r\sigma}\left(x\right)\coloneqq R^{-1}g(x)^\mathrm{T}{\nabla_{x}\sigma}^\mathrm{T}\left(x\right)$, $G_{r\sigma}\left(\hat{x}\right) \coloneqq R^{-1}g(\hat{x})^\mathrm{T}\nabla_{\hat{x}}\sigma(\hat{x})^\mathrm{T}$, $G_{r}\left(x\right) \coloneqq R^{-1}g(x)^\mathrm{T}$, and $L_{g\sigma}$ denotes the Lipschitz constant of $G_{r\sigma}$ over the set $\chi$. 
  
Completing the squares and using the Cauchy Schwartz inequality, the orbital derivative is bounded, on the set $\chi\times\mathbb{R}_{\geq 0}$, as
\begin{equation}
     \dot{V}_L(Z,t) \leq -\lambda_{\min}(Q)\|x\|^2-\frac{k_{c}\underline{c}}{3}\|\tilde{W}_{c}\|^2 - \frac{\lambda_{\min}(P)}{2}\|e\|^2 + \iota.
 \end{equation}
Let $\zeta$ be a constant such that $B_{\zeta}\subset\chi$. Based on the conditions stated in (\ref{zeta_cond}) and (\ref{VeIneq}), the orbital derivative can be bounded as 
\begin{equation}
     \dot{V}_L(Z,t) \leq -\upsilon_{l}\left(\|Z\|\right),  \forall {\upsilon_{l}}^{-1}(\iota) < \|Z\| < \zeta , \forall t\geq 0.
 \end{equation}
Using the sufficient condition  stated in  (\ref{zeta_cond}), \cite[Theorem~4.18]{SCC.Khalil2002} can be invoked to conclude that $Z$ is locally uniformly ultimately bounded. In particular, all trajectories starting from initial conditions bounded by $\|Z(0)\| \leq  {\overline{\upsilon}}^{-1}\left(  \underline{{\upsilon_l}}\left(\zeta\right)\right)$ remain with $\chi$ for all $t\geq 0$ and satisfy $\lim \sup_{t\to\infty} \|Z(t)\| \leq  {\overline{\upsilon}}^{-1}\left(  \underline{{\upsilon_l}}\left(\iota\right)\right)$. Therefore, provided $\|Z(0)\| \leq  {\overline{\upsilon}}^{-1}\left(  \underline{{\upsilon_l}}\left(\zeta\right)\right)$, the state and the state estimates, under the controller in \eqref{u_app} and the observer in \eqref{eq:observerdynamics_x}, remain within the compact set $\mathcal{C}$ where the Jacobian bounds and the Lipschitz constants are valid.
\end{proof}

\section{Simulation Results}\label{section:simulation}
\subsection{Two state dynamical system}\label{simsec1_SE}
In this section, a simulation study is performed to demonstrate the effectiveness of the developed method on a control affine nonlinear system of the form (\ref{eq:dynamics_x}) with states $x$ = $[(x)_1;(x)_2]$, where
 \begin{align}\label{sim_dyn}
f(x) &=\begin{bmatrix}
-(x)_1 + (x)_2 \\
 -(x)_1-\frac{1}{2}(x)_2\left(1-\left(\cos\left(2(x)_1\right)+2\right)^2\right)
\end{bmatrix}, \nonumber\\
g(x) &=\begin{bmatrix}
0 \\
\cos\left(2(x)_1\right)+2
\end{bmatrix}, C = \begin{bmatrix}
    0 & 1 
\end{bmatrix}
% \end{align}
\end{align}
In this study, the system dynamics are assumed to be unknown and the system states are not available for measurement, the output of the system is, however, measurable. To obtain the symmetric positive definite matrix, P, and the three observer gains, $L$, $H$, and $K$ that satisfy the stability conditions developed in Section ~\ref{section:sectorFormulation}, the LMI in (\ref{lmi}) is solved using Sedumi in YALMIP on Matlab  with the learning rate, $\alpha = 2$.

The control objective is to minimize the infinite horizon cost in (\ref{costfunction}), with $Q(x) = x^\mathrm{T}Qx$, where $Q = 5I_{2}$ and $R = 1$.  The basis for value function approximation is selected as $\sigma(\hat{x}) = 
[(\hat{x})_{1}^2;(\hat{x})_{1}(\hat{x})_{2} ;(\hat{x})_{2}^2]$. The initial conditions of the states, the estimated states, the critic weights, and the least squares gain matrix are selected as $x(0) = [-1;1]$, $\hat{x}(0) = [2;1.5]$, $\hat{W_{c}}(0) = [0.4;0.2;0.8]$, $\Gamma (0) = 50I_{3}$. The saturation constraint on the control input is selected as, $\bar{\lambda} = 3$, and the learning gains are selected as $k_c = 0.01$, $\nu=0.7$, and $\beta = 0.2$. The simulation uses 100 fixed Bellman error extrapolation points selected from a $2\times 2$  square centered around the origin of the system. 
% \vspace{-5mm}
\begin{figure}[H]
        \centering
		\includegraphics[width=1\columnwidth]{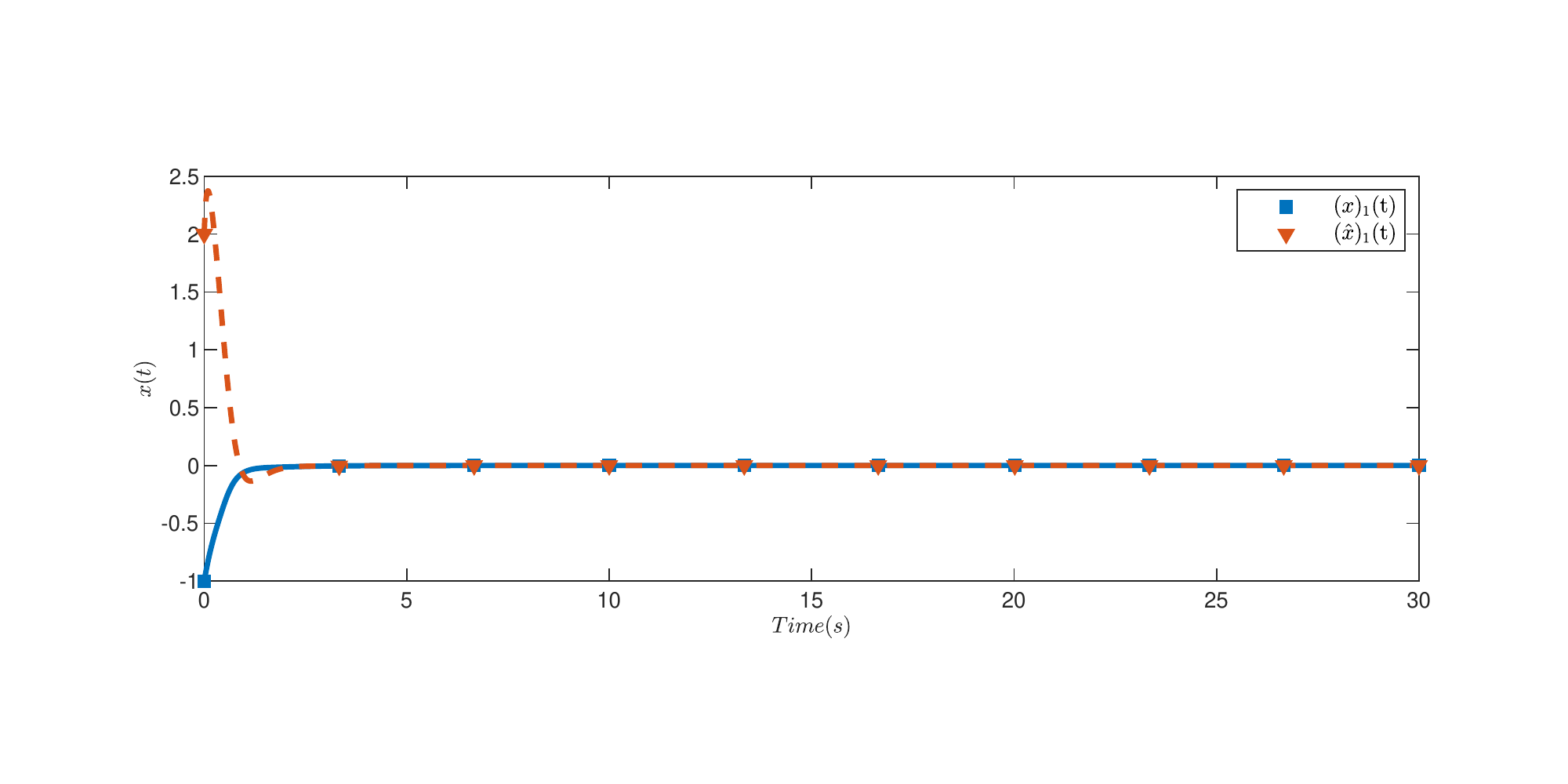}
            \vspace{-15mm}
		\caption{The trajectories of the actual state $x_1$ and estimated state $\hat{x}_1$.}
		\label{fig:state1_sim1}
\end{figure}
\begin{figure}[H]
  \vspace{-3mm}
\centering	\includegraphics[width=1\columnwidth]{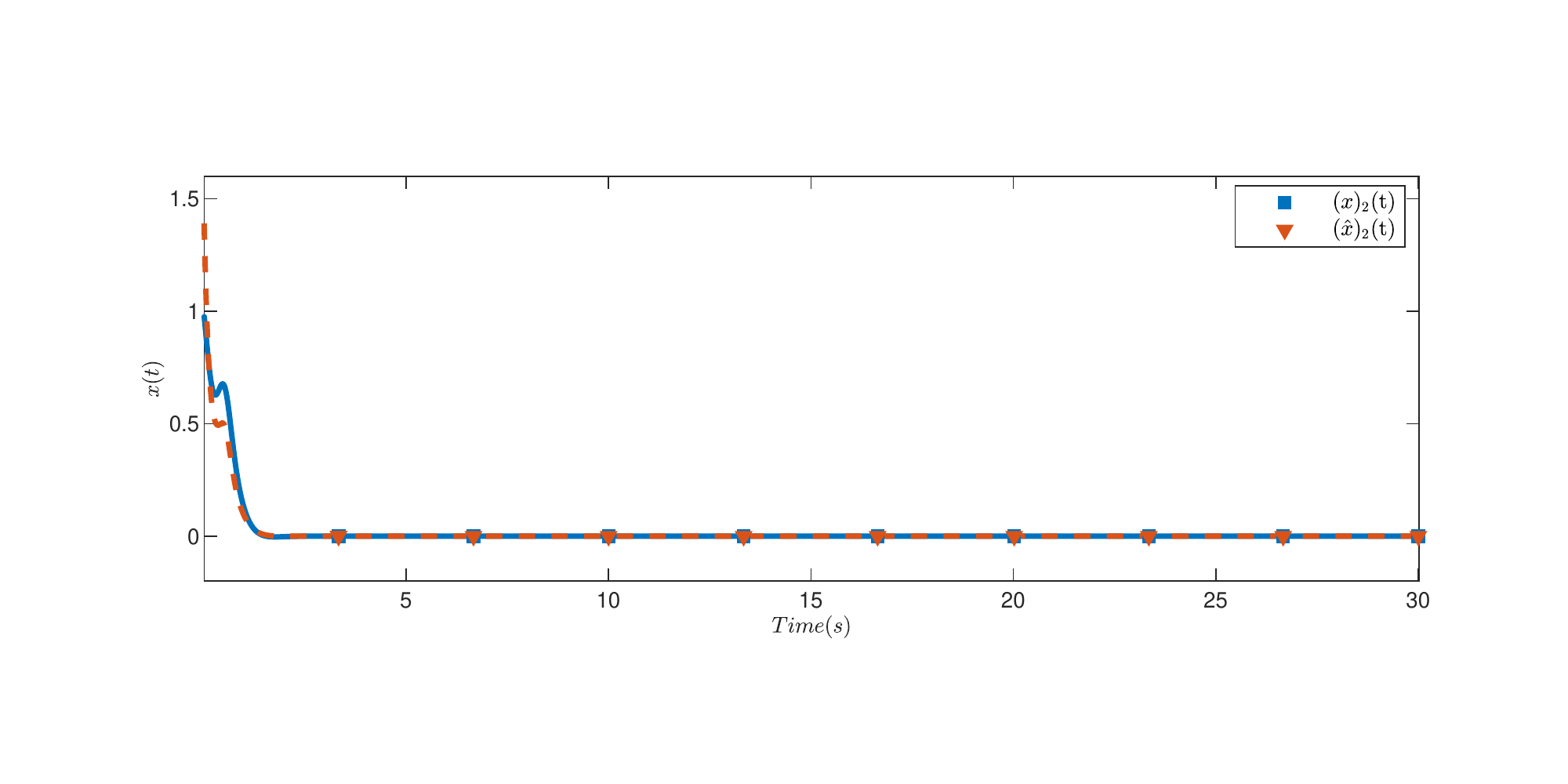}
  \vspace{-15mm}
		\caption{The trajectories of the actual state $x_2$ and estimated state  $\hat{x}_2$.}
		\label{fig:state2_sim1}
\end{figure}
\begin{figure}
  \vspace{-3mm}
         \centering
		\includegraphics[width=1\columnwidth]{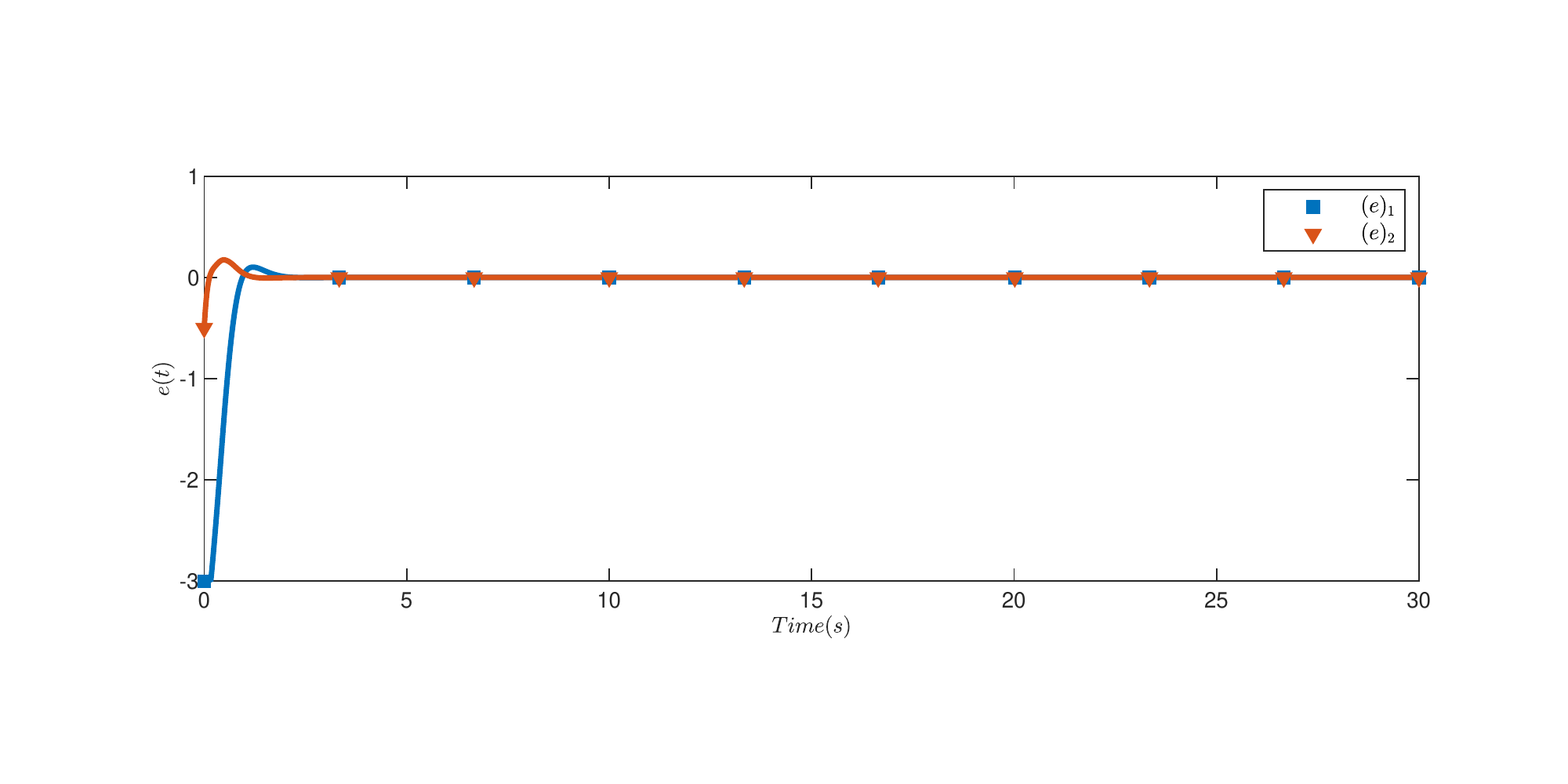}
             \vspace{-15mm}
            \caption{Estimation errors between the actual states and the estimated states}.
		\label{fig:error_sim1}
\end{figure}
\begin{figure}
  \vspace{-3mm}
        \centering		\includegraphics[width=1\columnwidth]{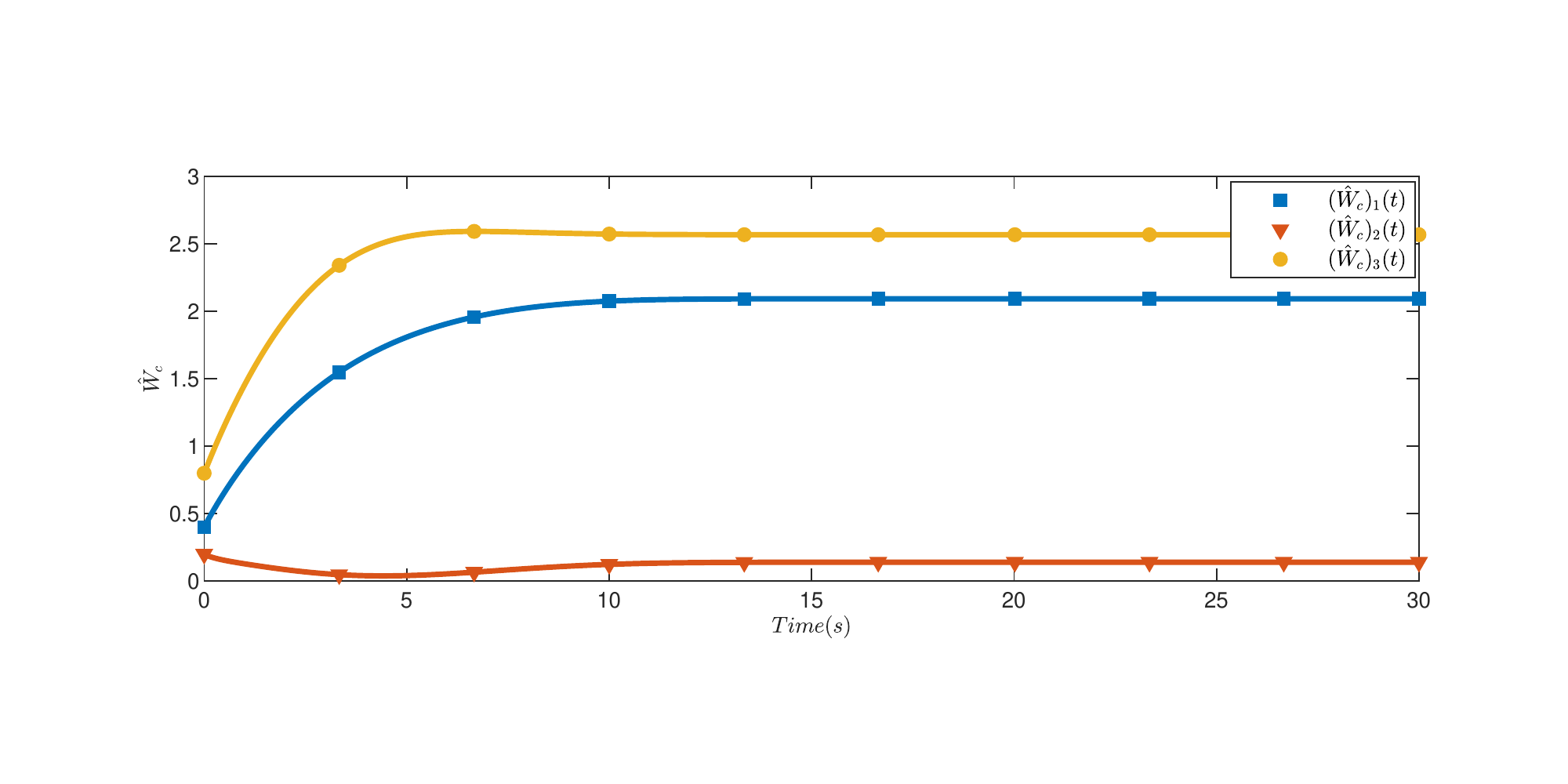}
  \vspace{-15mm}
		\caption{Estimated value function weights for the two-state dynamical system.}
		\label{fig:weights_sim1}
\end{figure}
\begin{figure}
  \vspace{-3mm}
        \centering
		\includegraphics[width=1\columnwidth]{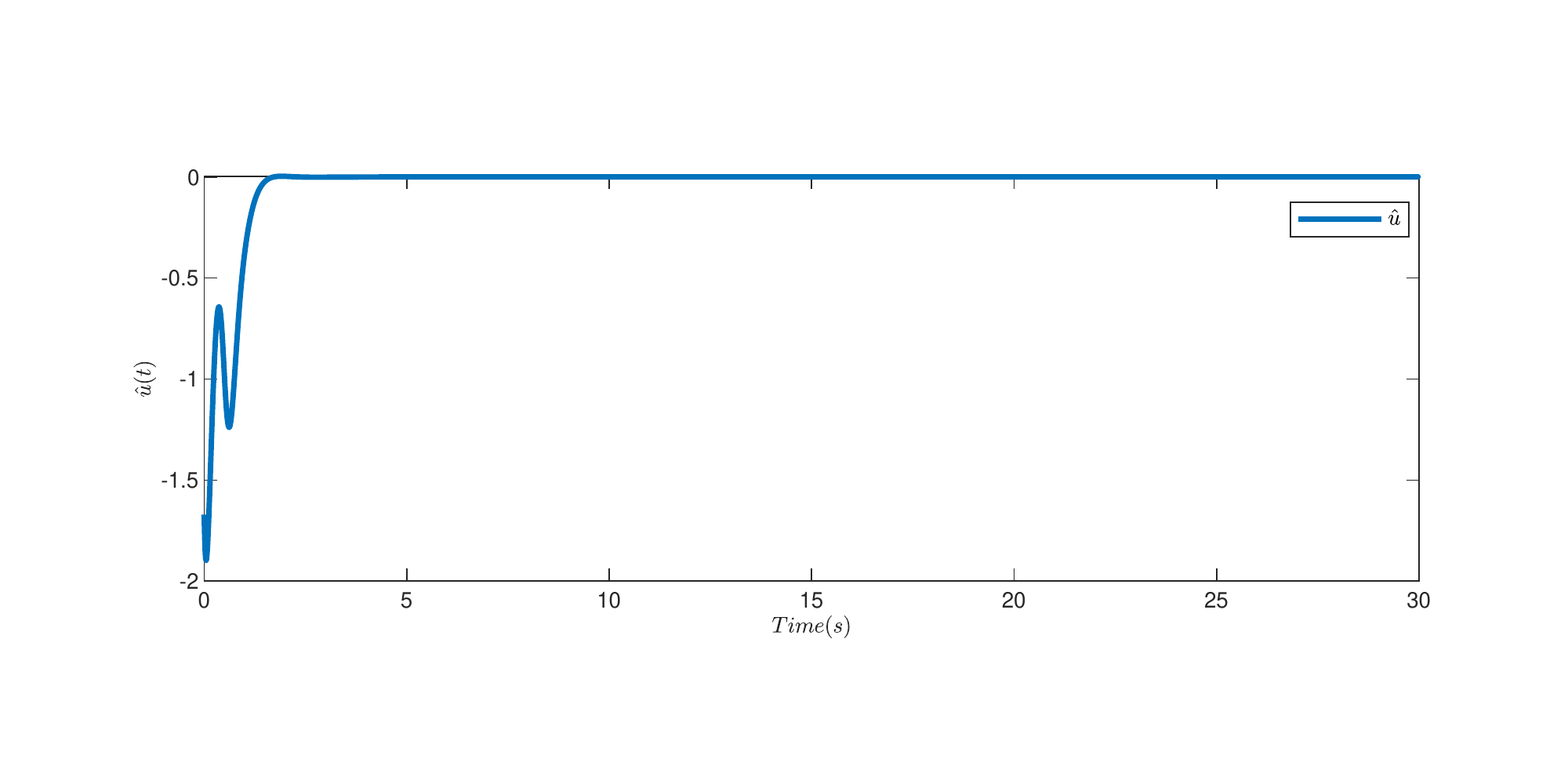}
            \vspace{-15mm}
		\caption{Trajectory of constrained control input.}
		\label{fig:control_sim1}
\end{figure}

\subsection{Results for the two-state system} \label{ Result_sim1}
Fig.\ref{fig:state1_sim1} and Fig.\ref{fig:state2_sim1} show that the trajectories of the actual states and state estimates converge to the same values, indicating the effectiveness of the designed observer in state estimation. From Fig.\ref{fig:error_sim1}, the resulting state estimation error is shown to converge to zero which demonstrates that the observer gains obtained by solving the linear
matrix inequalities (LMIs) satisfy the conditions required for stability developed in Section ~\ref{section:sectorFormulation}. From the simulation, the final values of $P$ and the observer gains $L$, $H$, and $K$ are obtained as,
\begin{gather*}
   P = \begin{bmatrix}
       0.75067  &  0.80202\\
    0.80202  &  1.9477
    \end{bmatrix}, \
    L = \begin{bmatrix}
   -18.8704 \\ 10.1087
    \end{bmatrix},\\
    H = \begin{bmatrix}
   1.1779 \\ -5.0084e-17
    \end{bmatrix}, \text{ and } 
    K = \begin{bmatrix}
    -5.0084e-17 \\ -5.0084e-17
    \end{bmatrix}
\end{gather*}The weight estimates are shown to be UUB in  Fig.\ref{fig:weights_sim1}. The controller in  Fig.\ref{fig:control_sim1} is shown to maintain stability while learning the value function and the observer ensures convergence of estimates of the states to their true values. The converged values match the values numerically obtained in \cite{SCC.Huang.Jiang2015}. 

\section{Conclusion}\label{section:conclusion}
 An observer-controller framework for output feedback RL in input-constrained nonlinear systems is developed. LMIs are formulated to obtain observer gain matrices and an MBRL-based controller is developed that maintains stability while finding an approximate solution to the optimal control problem. Simulation results demonstrate the effectiveness of the developed method and local uniform ultimately boundedness of the system states is guaranteed using a Lyapunov-based stability analysis. 

Though the observer provides accurate state estimates, nonlinear systems with time-varying Jacobian bounds can experience rank deficiency in certain regions of the state space which could potentially make the LMI in \eqref{lmi} infeasible at points in those regions. If the LMI is poorly conditioned, then the choice of LMI solver can affect the simulation results. To address these numerical issues, the current LMI architecture can be augmented with techniques such as \cite{SCC.Bengt.Richard.ea2012} which uses a delta operator formulation of the LMI.

Future research will also involve introducing a system identifier into the observer RL architecture that learns the system's dynamics for systems where the parameters of the system model are uncertain.

\small
\singlespacing
\bibliographystyle{IEEETrans.bst}
\bibliography{scc,sccmaster,scctemp}
 
\end{document}